\newif\ifdebug
\definecolor{darkgray}{rgb}{0.15,0.15,0.15}
\definecolor{lightgray}{rgb}{0.94,0.94,0.94}
\definecolor{lightlightgray}{rgb}{0.97,0.97,0.97}
\definecolor{darkred}{rgb}{0.80,0.00,0.00}
\definecolor{darkgreen}{rgb}{0.00,0.70,0.00}
\definecolor{darkblue}{rgb}{0.00,0.00,0.70}
\pgfplotsset{compat=1.18}
\setlist{nolistsep}
\theoremstyle{plain}  
\newtheorem{thm}{Theorem}
\newtheorem{lem}[thm]{Lemma}
\newtheorem{cor}[thm]{Corollary}
\theoremstyle{definition}
\newtheorem{defn}[thm]{Definition}
\theoremstyle{remark}  
\theoremstyle{plain}
\providecommand{\customgenericname}{}
\newcommand{\newcustomtheorem}[2]{%
  \newenvironment{#1}[1]
  {%
   \renewcommand\customgenericname{#2}%
   \renewcommand\theinnercustomgeneric{##1}%
   \innercustomgeneric
  }
  {\endinnercustomgeneric}
}
\providecommand*{\hr}[1][class-arg]{%
    \hspace*{\fill}\hrulefill\hspace*{\fill}
    \vskip 0.65\baselineskip
}
\newcommand{\brac}[1]{\left(#1\right)}
\newcommand{\abs}[1]{\left| #1 \right|}
\NewDocumentCommand{\prob}{mg}{
    \IfValueTF{#2}{
        P(#1 \,\vert\, #2)
    }{
        P(#1)
    }
}
\newcommand\restr[2]{{
  \left.\kern-\nulldelimiterspace 
  #1 
  \right|_{#2} 
  }}
\let\latexchi\chi
\renewcommand\chi{\@ifnextchar_\sub@chi\latexchi}
\newcommand{\sub@chi}[2]{
  \@ifnextchar^{\subsup@chi{#2}}{\latexchi^{}_{#2}}%
}
\newcommand{\subsup@chi}[3]{
  \latexchi_{#1}^{#3}%
}
\newcommand{\bra}[1]{\langle#1|}
\newcommand{\ket}[1]{|#1\rangle}
\newcommand\RedeclareMathOperator{%
  \@ifstar{\def\rmo@s{m}\rmo@redeclare}{\def\rmo@s{o}\rmo@redeclare}%
}
\newcommand\rmo@redeclare[2]{%
  \begingroup \escapechar\m@ne\xdef\@gtempa{{\string#1}}\endgroup
  \expandafter\@ifundefined\@gtempa
     {\@latex@error{\noexpand#1undefined}\@ehc}%
     \relax
  \expandafter\rmo@declmathop\rmo@s{#1}{#2}}
\newcommand\rmo@declmathop[3]{%
  \DeclareRobustCommand{#2}{\qopname\newmcodes@#1{#3}}%
}
\RedeclareMathOperator{\Re}{Re}
\colorlet{figureColorPalette1}{olive}
\colorlet{figureColorPalette2}{teal}
\colorlet{figureColorPalette3}{orange}
\colorlet{figureColorPalette4}{magenta}
\colorlet{figureColorPalette5}{darkgreen}
\tikzset{blackVertex/.style = {circle,
                         draw = white,
                         line width = 1.5pt,
                         fill = black!80!white,
                         inner sep = 3pt},
        redVertex/.style = {circle,
                         draw = white,
                         line width = 1.25pt,
                         fill = red!80!white,
                         inner sep = 2.5pt},
        labeledWhiteVertex/.style = {circle,
                         fill = white,
                         inner sep = 6pt},
        labeledBlackVertex/.style = {circle,
                         draw = black!80!white,
                         line width = 1.5pt,
                         inner sep = 2pt,
                         font = \tiny},
        edge/.style = {line width = 2pt},
        whiteEdge/.style = {-,
                            line width = 3pt,
                            draw = white},
        blackEdge/.style = {-,
                            line width = 2pt,
                            draw = black},
        redEdge/.style = {-,
                            line width = 1.5pt,
                            draw = red},
        pair/.style = {ellipse,
                       fill,
                       opacity=0.25,
                       minimum height=1.5em},
        labeledPair/.style = {ellipse,
                       fill,
                       opacity=0.25,
                       minimum height=2em}}
\begin{document}
\title{Monogamy of Nonlocal Games}

\author{David Cui}
\affiliation{Department of Mathematics, Massachusetts Institute of Technology, Cambridge, Massachusetts 02139, USA.}
\author{Arthur Mehta}
\author{Denis Rochette}
\affiliation{Department of Mathematics and Statistics, University of Ottawa, Ottawa, Ontario K1H 8M5, Canada.}

\begin{abstract}
    Bell monogamy relations characterize the trade-offs in Bell inequality violations among pairs of players in multiplayer settings. In this work, we introduce a method for extending monogamy relations from a distinguished set of configurations to monogamy relations on all possible multiplayer settings. Applying this approach, we show that nonlocality in the CHSH game arises in only two cases: the original two-player scenario and the four-player scenario on a line. While the bound for this four-player scenario follows from known quadratic monogamy constraints, we also establish two new six-party numerical monogamy relations that cannot be derived from existing results. In particular, we show there are points in the intersection of consecutive quadratic Bell monogamy relations which are not quantum realizable. Finally, we present a nonlocal game in which a single player can simultaneously saturate the quantum value with two other parties. This is the first known example of a nonlocal game unaffected by the monogamous nature of quantum entanglement. 
\end{abstract}

\maketitle

\section{Introduction}

Entanglement is one of the defining features of quantum mechanics. This resource can lead to correlations that no local hidden-variable theory can reproduce. The phenomenon associated with the existence of such correlations is known as \emph{nonlocality}. On the other hand, entanglement cannot be freely shared among multiple parties. This fundamental limitation, established by Coffman, Kundu, and Wootters~\cite{Coffman2000} and known as the \emph{monogamy of entanglement (MOE)} \cite{Koashi2004, Osborne2006, Regula2014}, places constraints on how entanglement is distributed in multipartite scenarios. Understanding how MOE interacts with nonlocality is important both from a foundational perspective \cite{Streltsov2012,Lancien2016} and for a range of applications \cite{Pawowski2010, Augusiak2014}.

In 1964, Bell showed that quantum mechanics is incompatible with any local hidden-variable theory \cite{bell}. A few years later, Clauser, Horne, Shimony, and Holt~\cite{chsh} introduced a two-party test---now known as the CHSH nonlocal game---that demonstrates outcome statistics impossible to explain with classical means. Since then, nonlocality in the two-party setting has been extensively studied using various nonlocal games \cite{Mermin, Peres, Aravind}. MOE can also be investigated using nonlocal games. Toner~\cite{toner2009monogamy} considered a three-party extension of the CHSH game with Alice, Bob, and Charlie, where the test randomly involves either Alice and Bob or Bob and Charlie. In this configuration, MOE implies that entanglement provides no advantage over classical correlations. Toner and Verstraete~\cite{0611001} then extended these results, characterizing the trade-off in nonlocality between the Alice--Bob and Bob--Charlie pairs.

The three-player scenario described above serves as a prototypical \emph{Bell monogamy relation}, quantifying how MOE restricts simultaneous violations of multiple Bell inequalities. However, its optimal quantum winning probability is $\frac{3}{4}$, achievable without any use of entanglement \cite{toner2009monogamy}. This bound gives rise to the following Bell monogamy relation: $\mathcal{B}_{AB} + \mathcal{B}_{BC} \leq 4$.

In this work, we formalize these multipartite scenarios by extending symmetric $2$-player games $\mathcal{G}$ to $n$-player games $\mathcal{G}^H$, where players occupy the vertices of a graph $H=(V,E)$ (see \cref{fig:game}). A referee selects an edge uniformly at random and plays $\mathcal{G}$ with the players on the vertices. The resulting Bell monogamy relation is given by the average of the two party Bell violations across all edges. 

We introduce a technique which allows one to extend a Bell monogamy relations on smaller fundamental graphs, to obtain new Bell monogamy relations on all possible configurations with any number of players. In particular this allows us to give give general sufficient conditions under which no quantum strategy can outperform the optimal classical strategy for any graph with more than two players. We call a $2$-player game satisfying this property \emph{monogamous}.

We then apply these criteria to a range of well-known two-player games, fully determining their behavior on all graphs $H$. Surprisingly, for the CHSH game, average case nonlocality appears only in two scenarios: the original $2$-player setting and the $4$-player setting played on the path graph $P_{4}$ (see \cref{fig:P3andP4}). It is known for the for the cases of $P_3$ and $P_4$ the set of quantum realizable correlations are determined exactly by studying the intersection of consecutive quadratic Bell monogamy relations, $\mathcal{B}_{XY}^2 + \mathcal{B}^2_{YZ} \leq 8$. Numerically, we confirm that there is no quantum advantage on two $6$ party arrangements but that studying consecutive quadratic Bell monogamy relations is not strong enough to determine this. This shows that taking the intersection of the quadratic Bell monogamy relations does not exactly characterize the set of quantum correlations on $6$ parties as it does on $P_{3}$ and $P_{4}$ \cite{0611001, ramanathan2018bell}.

\begin{figure}[ht]
    \centering
    \begin{tikzpicture}
        \coordinate (0) at (0,0) {};
        \coordinate (1) at (81:3.5em) {};
        \coordinate (2) at (130:4.2em) {};
        \coordinate (3) at (180:3.7em) {};
        \coordinate (4) at (175:7.5em) {};
        \coordinate (5) at (6:5.5em) {};

        \path let 
            \p1 = ($(5)-(0)$),
            \n1 = {veclen(\x1,\y1) + 2em/1pt},
            \n2 = {acos(\x1 / (\n1 - 2em/1pt))}
            in node[labeledPair, color=figureColorPalette1, minimum width=\n1, rotate around={\n2:($(0.center)!0.5!(5.center)$)}, label={[anchor=north, rotate around={\n2:($(0.center)!0.5!(5.center)$)}, label distance=-2pt]south:{\small$\omega_{AF}$}}] at ($(0.center)!0.5!(5.center)$) {};
            
        \foreach \i/\j in {0/1, 0/2, 0/3, 0/5, 3/4, 5/1} {
            \draw[whiteEdge] (\i.center) -- (\j.center);
            \draw[blackEdge] (\i.center) -- (\j.center);
        }

        \foreach \i/\n in {0/A, 1/B, 2/C, 3/D, 4/E, 5/F} {
            \node[labeledWhiteVertex] at (\i.center) {};
            \node[labeledBlackVertex] at (\i.center) {$\n$};
        }
        
    \end{tikzpicture}
    \caption{Several players who share a global state are represented as the vertices of a graph $H$. A pair of players are chosen to play a symmetric $2$-player game $\mathcal{G}$, by sampling an edge from $H$. The edge $(A,F)$ is chosen and the pair of player win the game $\mathcal{G}$ with probability $\omega_{AF}$.}
    \label{fig:game}
\end{figure}

In contrast, we establish that other well-studied games, such as the Odd Cycle game \cite{Cleve}, are monogamous. Finally, we give an example of a $2$-player nonlocal game for which it is not possible to obtain a Bell monogamy relation for the three party configuration on a line. In particular Bob can saturate the quantum value with both Alice and Charlie simultaneously. This is the first such example of a non-monogamous Bell inequality, which had been believed to be impossible.

We note that our techniques are applicable in a broader context. In particular, they can be used to determine when no advantage exists over classical strategies for commuting-operator correlations \cite{Tsirelson1993SomeRA,Vid19} or even more general non-signaling correlations \cite{Popescu1994}. The latter are especially important for understanding the boundary between physically realizable and non-physical theories \cite{Pawowski2009,Ramanathan2014,Botteron2024}.

In previous work, Bell monogamy relations have been studied in the context of correlation complementarity, where $n$ observers each perform two measurements yielding dichotomic $\pm 1$ outcomes \cite{Kurzyski2011, PhysRevLett.88.210401}. Tran et al.~\cite{ramanathan2018bell} investigated Bell monogamy relations for observers arranged in a network, as illustrated in \cref{fig:game}, and derived a family of tight monogamy relations for an arbitrary number of observers. Unlike our approach, their method selects edges of the graph $H$ according to some non-uniform distribution determined by the line graph $L(H)$. These monogamy relations follow directly from combining a three-party Bell monogamy relation with the handshaking lemma from graph theory. By contrast, the monogamy relations we establish cannot, in general, be obtained solely from such three-party relations.

Of particular relevance is Tran et al.’s analysis of general Bell inequalities on $P_{4}$, which characterizes the quantum trade-offs imposed by three-party Bell monogamy relations. Our optimal $P_{4}$ strategy for CHSH falls within this characterization.

\begin{figure}[ht]
    \centering
    \hfill
        \begin{tikzpicture}
            \coordinate (1) at (0em,0) {};
            \coordinate (2) at (4em,0) {};
            \coordinate (3) at (8em,0) {};

            \node at (4em,-4em) {$\omega^*\brac{\mathrm{CHSH}^{P_3}} = 0.75$};
    
            \path let \p1 = ($(1)-(2)$), \n1 = {1em + veclen(\x1,\y1)} in node[pair, color=figureColorPalette1, minimum width=\n1, label={[anchor=north]south:{\scriptsize$\omega^*\!=\!0.75$}}] at ($(1.center)!0.5!(2.center)$) {};
            \path let \p1 = ($(2)-(3)$), \n1 = {1em + veclen(\x1,\y1)} in node[pair, color=figureColorPalette2, minimum width=\n1, label={[anchor=north]south:{\scriptsize$\omega^*\!=\!0.75$}}] at ($(2.center)!0.5!(3.center)$) {};
            
            \foreach \i/\j in {1/2, 2/3} {
                \draw[whiteEdge] (\i.center) -- (\j.center);
                \draw[blackEdge] (\i.center) -- (\j.center);
            }
    
            \foreach \i in {1,...,3}
                \node[blackVertex] at (\i.center) {};
    
        \end{tikzpicture}
    \hfill
        \begin{tikzpicture}
            \coordinate (1) at (0em,0) {};
            \coordinate (2) at (4em,0) {};
            \coordinate (3) at (8em,0) {};
            \coordinate (4) at (12em,0) {};

            \node at (6em,-4em) {$\omega^*\brac{\mathrm{CHSH}^{P_4}} \approx 0.76$};
    
            \path let \p1 = ($(1)-(2)$), \n1 = {1em + veclen(\x1,\y1)} in node[pair, color=figureColorPalette1, minimum width=\n1, label={[anchor=north]south:{\scriptsize$\omega^*\!\approx\!0.81$}}] at ($(1.center)!0.5!(2.center)$) {};
            \path let \p1 = ($(2)-(3)$), \n1 = {1em + veclen(\x1,\y1)} in node[pair, color=figureColorPalette2, minimum width=\n1, label={[anchor=north]south:{\scriptsize$\omega^*\!\approx\!0.65$}}] at ($(2.center)!0.5!(3.center)$) {};
            \path let \p1 = ($(3)-(4)$), \n1 = {1em + veclen(\x1,\y1)} in node[pair, color=figureColorPalette3, minimum width=\n1, label={[anchor=north]south:{\scriptsize$\omega^*\!\approx\!0.81$}}] at ($(3.center)!0.5!(4.center)$) {};
            
            \foreach \i/\j in {1/2, 2/3, 3/4} {
                \draw[whiteEdge] (\i.center) -- (\j.center);
                \draw[blackEdge] (\i.center) -- (\j.center);
            }
    
            \foreach \i in {1,...,4}
                \node[blackVertex] at (\i.center) {};
    
        \end{tikzpicture}
    \hfill{}
    \caption{Optimal quantum winning probabilities of the $\mathrm{CHSH}$ game on the path graphs $P_3$ (left) and $P_4$ (right).}
    \label{fig:P3andP4} 
\end{figure}
\section{Nonlocal Games over Graphs}

In a $2$-player \emph{symmetric nonlocal game} $\mathcal{G}$ (or just \emph{game} in the letter), the players Alice and Bob are given questions $x_1,x_2 \in \mathcal{I}$, sampled independently according to a uniform distribution, and provide answers $a_1,a_2 \in \mathcal{O}$ according to a quantum correlation $p(a_1,a_2|x_1,x_2)$. The winning condition is determined by a referee predicate $v(x_1,x_2,a_1,a_2)$ that satisfies the symmetric condition: $v(x_1,x_2,a_1,a_2) = v(x_2,x_1,a_2,a_1)$. We write $\omega(\mathcal{G}; S)$ for the winning probabilities of a game $\mathcal{G}$ given a strategy $S$. The maximal classical and quantum winning probabilities are denoted  $\omega(\mathcal{G})$ and $\omega^*(\mathcal{G})$ respectively. We say that the game $\mathcal{G}$ has a \emph{quantum advantage} if $\omega(\mathcal{G}) < \omega^*(\mathcal{G})$.

In this letter, a \emph{graph} $H=(V,E)$ is a no multiple edge undirected graph, where $V$, $E$ are finite vertex and edge sets, respectively. We denote the path graph on $n$ vertices by $P_{n}$. The family of graphs $\mathcal{T}_k$ denotes the collection of trees of size $2k-1$ formed by joining $k$ copies of $P_{2}$ using $k-1$ edges (as illustrated in \cref{fig:T3andT4}).

A \emph{graph homomorphism} between a graphs $H_1$ a graph $H_1$ is a function that maps the vertices of $H_1$ to the vertices of $H_2$ that preserves adjacency: adjacent vertices in $H_1$ are mapped to adjacent vertices in $H_2$. We write $H_1 \to H_2$ if there exists a homomorphism from $H_1$ to $H_2$. For example, bipartite graphs are precisely those that admit a graph homomorphism to $P_2$. Moreover, if the strategy graph contains a loop, then any graph admits a graph homomorphism to it.

\begin{defn}
    The \emph{strategy graph} $\mathcal{S}_{\mathcal{G}}=(V,E)$ of a game $\mathcal{G}$ is a (not necessarily simple or connected) graph, with vertices consisting of functions $f:\mathcal{I} \rightarrow \mathcal{O}$, and with edges consisting of the pairs of functions $(f_A, f_B)$ that achieve the game’s optimal classical winning probability $\omega(\mathcal{G})$ (see for example \cref{fig:strategy_graph}).
\end{defn}

\begin{figure}[ht]
    \centering
    \begin{tikzpicture}
            \coordinate (1) at (0,0) {};
            \coordinate (2) at (3em,0) {};
            \coordinate (3) at (6em,0) {};
            \coordinate (4) at (9em,0) {};
    
            \foreach \i/\j in {1/2, 2/3, 3/4} {
                \draw[whiteEdge] (\i.center) -- (\j.center);
                \draw[blackEdge] (\i.center) -- (\j.center);    
            }

            \draw[whiteEdge, every loop/.style={min distance=3em}] (1.center) edge[in=130,out=-130,loop] ();
            \draw[blackEdge, every loop/.style={min distance=3em}] (1.center) edge[in=130,out=-130,loop] ();

            \draw[whiteEdge, every loop/.style={min distance=3em}] (4.center) edge[in=50,out=-50,loop] ();
            \draw[blackEdge, every loop/.style={min distance=3em}] (4.center) edge[in=50,out=-50,loop] ();
    
        \foreach \i in {1,...,4}
            \node[blackVertex] at (\i.center) {};
    \end{tikzpicture}
    \caption{The strategy graph of the CHSH game. The vertices on the extreme left and right are the constant strategies, while the two intermediate vertices correspond to the identity and flip strategies.}
    \label{fig:strategy_graph}
\end{figure}

\begin{defn}
    Given a simple connected graph $H = (V,E)$, with $|V|=n$ and a game $\mathcal{G}$, with referee predicate $v(x_1,x_2,a_1,a_2)$, we define an $n$-player game $\mathcal{G}^{H}$ with referee predicate
    \begin{equation*}
        v^{H}(x_{1}, \dots, x_{n}, a_{1}, \dots, a_{n}) \coloneqq \frac{1}{\abs{E}} \sum_{(i,j) \in E} v(x_{i}, x_{j}, a_{i}, a_{j}).
    \end{equation*}
\end{defn}
A straightforward computation shows that the winning probability of $\mathcal{G}^{H}$ can be expressed as a uniformly weighted sum over the winning probability of $\mathcal{G}$ on edge $(i,j) \in E$ as illustrated in \cref{fig:game}. A game $\mathcal{G}$ is said to have \emph{quantum advantage on a graph} $H$ if $\omega(\mathcal{G}^H) < \omega^*(\mathcal{G}^H)$.

Our main theorem gives a method to determine for which graphs $H$ the quantum value collapses to the classical 2-player value, $\omega(\mathcal{G}) =\omega(\mathcal{G}^H) =\omega^*(\mathcal{G}^H)$.

It should be noted that the classical value of a game is not generally the same in all graphs. For example, consider a 2-player binary output game with referee predicate $v(x_1, x_2, a_1, a_2) = 1 - \delta_{a_1 a_2}$ (i.e., the outputs of the two players are require to be different). Then the optimal classical value is $1$, but it is not possible for 3 players in the $C_3$ configuration to obtain this value. In contrast, for games where the strategy graph contains a loop, the classical value of a game is the same in all graphs.

Below we give a characterization of when the classical value does not decrease on a graph $H$.

\begin{lem} \label{lem:classical_graph_strategy}
    The classical value of a game $\mathcal{G}$ on a graph $H$ is the same as the classical value of the game if and only if there is a graph homomorphism $H \to \mathcal{S}_{\mathcal{G}}$.
\end{lem}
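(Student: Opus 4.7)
The plan is to compare the optimal average edge-value on $H$ directly against the classical two-player value $\omega(\mathcal{G})$. A deterministic strategy for $\mathcal{G}^H$ is nothing more than a labeling $v \mapsto f_v$ of each vertex of $H$ by a function $f_v \in \mathcal{O}^{\mathcal{I}}$, and its winning probability is exactly $\frac{1}{\abs{E}} \sum_{(i,j) \in E} \omega(\mathcal{G}; (f_i, f_j))$. Since every edgewise term is at most $\omega(\mathcal{G})$ by definition of the two-player classical value, this immediately gives the upper bound $\omega(\mathcal{G}^H) \leq \omega(\mathcal{G})$, and (as usual for classical values) linearity in the shared-randomness variable lets me restrict to deterministic strategies without loss of generality.

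For the backward direction, suppose $\phi : H \to \mathcal{S}_{\mathcal{G}}$ is a graph homomorphism. I assign each vertex $v$ the local deterministic strategy $f_v := \phi(v)$, viewed as a vertex of $\mathcal{S}_{\mathcal{G}}$, i.e.\ as a function $\mathcal{I} \to \mathcal{O}$. By definition of the strategy graph, for every edge $(i,j) \in E$ the pair $(\phi(i), \phi(j))$ is an edge of $\mathcal{S}_{\mathcal{G}}$ and therefore wins $\mathcal{G}$ with probability exactly $\omega(\mathcal{G})$. Averaging over $E$ gives $\omega(\mathcal{G}^H) \geq \omega(\mathcal{G})$, which together with the upper bound yields equality.

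For the forward direction, assume $\omega(\mathcal{G}^H) = \omega(\mathcal{G})$ and fix an optimal deterministic strategy $(f_v)_{v \in V}$ attaining this value. The average of the $\abs{E}$ numbers $\omega(\mathcal{G}; (f_i, f_j))$, each at most $\omega(\mathcal{G})$, equals $\omega(\mathcal{G})$; hence every single term must saturate this bound. By definition of $\mathcal{S}_{\mathcal{G}}$, this means each $(f_i, f_j)$ is an edge of the strategy graph, so $v \mapsto f_v$ is the desired homomorphism $H \to \mathcal{S}_{\mathcal{G}}$.

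The argument is essentially a direct unpacking of the definitions of $\mathcal{G}^H$ and $\mathcal{S}_{\mathcal{G}}$, so no real technical obstacle appears. The one point that deserves care is the reduction to deterministic strategies in the forward direction: because shared randomness convexly combines deterministic strategies and the objective is linear, an optimal mixed strategy can always be replaced by an optimal deterministic one, and only then does the ``average $=$ bound implies every term $=$ bound'' step identify an honest homomorphism rather than only a fractional one.
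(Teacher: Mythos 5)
Your proof is correct and follows essentially the same route as the paper's: reduce to deterministic strategies by convexity, observe that the value of $\mathcal{G}^H$ is an edge-average of two-player values each bounded by $\omega(\mathcal{G})$, and note that equality forces every edge to saturate the bound, which is precisely the condition that the vertex labeling be a homomorphism into $\mathcal{S}_{\mathcal{G}}$. Your write-up is in fact slightly more careful than the paper's, which states the contradiction in the forward direction with the inequality reversed ($\omega(\mathcal{G}) < \omega(\mathcal{G}^H)$ where $\omega(\mathcal{G}^H) < \omega(\mathcal{G})$ is what actually follows).
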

\begin{proof}    
    If the classical value of $\mathcal{G}^H$ equals the classical value of $\mathcal{G}$, then any optimal classical deterministic strategy for $\mathcal{G}^H$ induces a graph homomorphism from $H$ to $H^{\mathcal{G}}$. Specifically, for each vertex of $H$ is assigned a classical deterministic strategy of $\mathcal{G}$. For each edge in $H$, the paired strategies achieve exactly the winning probability $\omega(\mathcal{G})$; if this were not the case, we would have $\omega(\mathcal{G}) < \omega(\mathcal{G}^H)$, contradicting our assumption. Consequently, this pair forms an edge in $H^{\mathcal{G}}$. Since any optimal classical strategy can be taken to be deterministic by convexity, this completes the first direction of the proof.
    
    For the converse, suppose there exists a graph homomorphism from $H$ to the strategy graph $H^{\mathcal{G}}$. This graph homomorphism defines an optimal classical deterministic strategy for $\mathcal{G}^H$ by assigning a deterministic strategy of $\mathcal{G}$ to each vertex in $H$ and ensuring that for any edge in $H$, the winning probability is precisely $\omega(\mathcal{G})$.
\end{proof}

We then give a technique for establishing quantum upperbounds for families of graphs. Before stating this theorem, we provide an illustrative example of how, for certain graphs $H$, the quantum value $\omega^*(\mathcal{G}^{H})$ can be upperbounded by the quantum value on $P_3$. In \cref{fig:fractionalP3Decomposition} we consider playing the game on the $3$-cycle graph, $C_3$, with vertices $A$, $B$ and $C$. One can uniformly cover $C_3$ with $3$ weighted copies, $P_{ABC}, P_{BCA}, P_{CAB}$ of the graph $P_3$. Since each edge in the original graph is covered by exactly two copies of $P_{3}$, it can be shown $\omega \big( \mathcal{G}^{C_{3}}; S \big) \leq \omega^* \big( \mathcal{G}^{P_3} \big)$ for any strategy $S$.

\begin{figure}[ht]
    \centering
    \begin{tikzpicture}
        \coordinate (1) at (90:4em) {};
        \coordinate (2) at (210:4em) {};
        \coordinate (3) at (330:4em) {};
        
        \foreach \i/\j in {1/2, 2/3, 3/1} {
            \draw[whiteEdge] (\i.center) -- (\j.center);
            \draw[blackEdge] (\i.center) -- (\j.center);
        }

        \foreach \i/\n in {1/A, 2/B, 3/C} {
            \node[labeledWhiteVertex] at (\i.center) {};
            \node[labeledBlackVertex] at (\i.center) {\n};
        }

        \coordinate (p1) at ([xshift=-1em]1) {};
        \coordinate (p2) at (210:6em) {};
        \coordinate (p3) at ([yshift=-1em]3) {};

        \draw[-, rounded corners, edge, line cap=round, draw=figureColorPalette1] (p1) -- (p2) -- (p3);

        \coordinate (p1) at ([yshift=-1.5em]2) {};
        \coordinate (p2) at (330:7em) {};
        \coordinate (p3) at ([xshift=1.5em]1) {};

        \draw[-, rounded corners, edge, line cap=round, draw=figureColorPalette2] (p1) -- (p2) -- (p3);

        \coordinate (p1) at ([xshift=2.25em]3) {};
        \coordinate (p2) at (90:7.75em) {};
        \coordinate (p3) at ([xshift=-2.25em]2) {};

        \draw[-, rounded corners, edge, line cap=round, draw=figureColorPalette3] (p1) -- (p2) -- (p3);
    \end{tikzpicture}
    \caption{An example of a covering of the $3$-cycle graph $C_3$ by weighted copies of the path graph $P_3$. Each $P_3$ subgraph in $C_3$ (surrounding lines) has the weight $\nicefrac{1}{2}$. Each edge in $C_3$ is covered by exactly two $P_3$ subgraphs.}
    \label{fig:fractionalP3Decomposition}
\end{figure}

Hence, we see that if graph $H$ can be suitably decomposed using weighted copies of $P_3$, then we can readily conclude $\omega^*(\mathcal{G}^{H}) \leq \omega^*(\mathcal{G}^{P_{3}})$. 

\begin{lem}\label{lem:quantum_upperbound}
    If the quantum value of a game on $P_{3}$ is bounded by $\nu$, then the quantum value is bounded above by $\nu$ for all graphs not in $\mathcal{T}_{k}$. 
\end{lem}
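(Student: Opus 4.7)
The plan is to introduce the notion of a \emph{fractional $P_3$-cover} of $H$: a collection of $P_3$-subgraphs $P_3^{(1)},\dots,P_3^{(m)} \subseteq H$ with weights $\lambda_1,\dots,\lambda_m \ge 0$ summing to $1$ such that every edge $e \in E(H)$ is covered with total weight $\sum_{i\,:\, e \subset P_3^{(i)}} \lambda_i = 2/|E(H)|$. The covering of $C_3$ by three copies of $P_3$ of weight $\tfrac13$ depicted in \cref{fig:fractionalP3Decomposition} is the prototypical example. The proof then splits into two stages: (a) any graph admitting such a cover satisfies the desired bound, and (b) every $H \notin \mathcal{T}_k$ admits a fractional $P_3$-cover.

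Stage (a) is a direct convexity calculation. Given any strategy $S$ on $\mathcal{G}^H$, tracing out all players not in $P_3^{(i)}$ yields a three-player strategy $S_i$ for $\mathcal{G}^{P_3}$. Since $\omega(\mathcal{G}^{P_3}; S_i) = \tfrac12 \sum_{e \subset P_3^{(i)}} \omega_e(S)$, the cover identity gives
\begin{equation*}
\sum_i \lambda_i\, \omega(\mathcal{G}^{P_3}; S_i) = \frac{1}{|E(H)|}\sum_{e \in E(H)} \omega_e(S) = \omega(\mathcal{G}^H; S),
\end{equation*}
so $\omega(\mathcal{G}^H; S) \le \omega^*(\mathcal{G}^{P_3}) \le \nu$ for every $S$, whence $\omega^*(\mathcal{G}^H) \le \nu$.

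Stage (b) is the combinatorial heart. I would reformulate the cover condition through the line graph: after rescaling weights by $|E(H)|/2$, a fractional $P_3$-cover of $H$ is precisely a fractional perfect matching of $L(H)$, whose vertices are edges of $H$ and whose edges are $P_3$-subgraphs. If $H$ contains a cycle $C_\ell$, I would extend the $C_3$ construction, placing a $P_3$ centered at each vertex of $C_\ell$ with weight $1/\ell$ and absorbing any pendant subtrees by redistributing weight outward branch by branch, preserving uniform edge coverage at each step. If $H$ is a tree, the condition $H \notin \mathcal{T}_k$ is equivalent to $H$ admitting no perfect matching (since the $k$ copies of $P_2$ in the definition of $\mathcal{T}_k$ are exactly such a matching, and perfect matchings of trees are unique when they exist); in this regime I would construct a fractional perfect matching of $L(H)$ using the half-integral structure of the fractional matching polytope together with a Tutte--Berge-style parity argument propagating the failure of Hall's condition from $H$ to $L(H)$.

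The main obstacle will be stage (b), particularly the tree case. Line graphs of trees can be intricate, and turning the LP-duality statement into an explicit weighting is the most delicate part of the argument. Producing a cover for non-tree graphs with several pendant subtrees attached to a cycle also requires careful bookkeeping, since redistributing weight outward from the cycle must satisfy the uniform coverage condition simultaneously along every branch; this reduces to a triangular linear system that I expect to be solvable by induction on the depth of the branch.
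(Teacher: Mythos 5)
Your stage (a) is correct and matches the paper's argument (a weighted convexity computation over the $P_3$-subgraphs in the cover; your normalization with $\sum_i\lambda_i=1$ and edge-weight $2/|E(H)|$ is just a rescaling of the paper's decomposition where each edge receives total weight $1$). Your key reformulation in stage (b) --- that a fractional $P_3$-cover of $H$ is the same thing as a fractional perfect matching of the line graph $L(H)$ --- is also exactly the paper's route, and your observation that a tree lies in $\bigcup_k\mathcal{T}_k$ precisely when it has a perfect matching is correct.

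The genuine gap is that stage (b) is never actually proved. For non-tree graphs you propose to seed weights on a cycle and then ``absorb pendant subtrees by redistributing weight outward branch by branch,'' and for trees you invoke ``the half-integral structure of the fractional matching polytope together with a Tutte--Berge-style parity argument propagating the failure of Hall's condition from $H$ to $L(H)$''; neither is an argument, and you concede as much. The redistribution scheme is not obviously consistent (the weight arriving at a branch vertex is constrained simultaneously by every incident branch, and nothing guarantees the resulting triangular system has a nonnegative solution), and ``propagating'' a matching condition from $H$ to $L(H)$ is precisely the nontrivial combinatorial content you need to supply. The paper closes this cleanly by citing the standard characterization of fractional perfect matchings (a graph $G$ has one iff $i(G\setminus S)\le|S|$ for every vertex subset $S$, where $i$ counts isolated vertices); applied to $L(H)$ this translates to the condition that $\tilde{i}(H\setminus F)\le|F|$ for every edge subset $F$, where $\tilde{i}$ counts isolated-$P_2$ components, and a short connectivity count then shows a connected $H$ violates this exactly when $H\in\mathcal{T}_k$ for some $k$ (this is \cref{lem:graph_decomp}). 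You should either invoke that characterization theorem explicitly or supply a genuine construction; as written, the combinatorial heart of the lemma is missing.
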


Our proof, which is given in supplementary material, establishes a correspondence between $P_{3}$-decompositions of a graph and fractional perfect matchings of its line graph. We then use a well-known characterization of fractional perfect matchings to derive exactly when a graph has a $P_{3}$-decomposition. The full proof is contained in the supplementary material.

Combining \cref{lem:classical_graph_strategy} and \cref{lem:quantum_upperbound}, we obtain our main result.

\begin{thm}\label{thm:Main1}
     If there is no quantum advantage of a game $\mathcal{G}$ on $P_{3}$, then the game does not have quantum advantage on any graph $H$ such that $H \to \mathcal{S}_{\mathcal{G}}$ and $H$ is not in $\mathcal{T}_k$, for all $k \geq 2$.
\end{thm}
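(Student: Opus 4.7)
The plan is to combine the two lemmas directly, using the classical characterization in \cref{lem:classical_graph_strategy} to translate the homomorphism hypothesis into an equality of classical values, and then using \cref{lem:quantum_upperbound} with the quantum value on $P_3$ serving as the bounding constant $\nu$.

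First I would handle the base case on $P_3$. Since $P_3$ is bipartite, it admits a homomorphism to $P_2$, and hence to any strategy graph $\mathcal{S}_{\mathcal{G}}$ that contains at least one edge (which is automatic for any game whose optimal classical strategies are not pathological). Applying \cref{lem:classical_graph_strategy} gives $\omega(\mathcal{G}^{P_3}) = \omega(\mathcal{G})$. The hypothesis that $\mathcal{G}$ has no quantum advantage on $P_3$ then upgrades this to $\omega^*(\mathcal{G}^{P_3}) = \omega(\mathcal{G})$, which fixes the benchmark value $\nu \coloneqq \omega(\mathcal{G})$.

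Next, for any graph $H$ with $H \to \mathcal{S}_{\mathcal{G}}$ and $H \notin \mathcal{T}_k$ for all $k \geq 2$, I would invoke \cref{lem:quantum_upperbound} with this $\nu$ to obtain the quantum upper bound $\omega^*(\mathcal{G}^{H}) \leq \omega^*(\mathcal{G}^{P_3}) = \omega(\mathcal{G})$. A second application of \cref{lem:classical_graph_strategy}, using the hypothesis $H \to \mathcal{S}_{\mathcal{G}}$, gives the matching classical lower bound $\omega(\mathcal{G}^{H}) = \omega(\mathcal{G})$. Sandwiching, $\omega(\mathcal{G}) = \omega(\mathcal{G}^H) \leq \omega^*(\mathcal{G}^H) \leq \omega(\mathcal{G})$, so equality holds throughout and there is no quantum advantage on $H$.

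Given that both lemmas are already in hand, no step is genuinely difficult; the only subtlety is the book-keeping needed to ensure $P_3 \to \mathcal{S}_{\mathcal{G}}$ holds, which is why excluding the family $\mathcal{T}_k$ is essential (so that \cref{lem:quantum_upperbound} applies) while $P_3$ itself, being bipartite and not in $\mathcal{T}_k$ for $k \geq 2$, can still be used to set the quantum benchmark. The main conceptual content is really carried by \cref{lem:quantum_upperbound}; the theorem itself is essentially a clean combination of the two preceding lemmas.
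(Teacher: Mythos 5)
Your proposal is correct and follows essentially the same route as the paper's proof: invoke \cref{lem:quantum_upperbound} with $\nu = \omega^*(\mathcal{G}^{P_3})$ to bound the quantum value on $H$, then use \cref{lem:classical_graph_strategy} (applied to both $P_3$ and $H$, each of which maps homomorphically into $\mathcal{S}_{\mathcal{G}}$) to match this bound with the classical value $\omega(\mathcal{G}^H)$. The only addition is your explicit remark that $P_3 \to \mathcal{S}_{\mathcal{G}}$ via bipartiteness, a bookkeeping point the paper leaves implicit here but states in the proof of \cref{thm:Main2}.
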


Next we show that we can further restrict our attention using the inductive nature of the families of graphs $\mathcal{T}_{k}$. As shown in \cref{fig:T3andT4}, understanding the behavior on $P_3$ and graphs from $\mathcal{T}_k$ for a fixed value of $k$, is sufficient to conclude the behavior on $\mathcal{T}_{k+1}$. The details are given in the supplementary material.

\begin{thm}\label{thm:Main2}
    Suppose a game $\mathcal{G}$ does not have quantum advantage on $P_{3}$ and all graphs in $\mathcal{T}_{k}$, for some $k$, then the game has no quantum advantage on all graphs $\mathcal{T}_{k+1}$.
\end{thm}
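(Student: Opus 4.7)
The plan is to exploit the recursive structure of the family $\mathcal{T}_{k+1}$. Any $H \in \mathcal{T}_{k+1}$ is built from $k+1$ rigid $P_2$ blocks (forming a perfect matching of its vertex set) joined by $k$ additional connecting edges. Contracting each matched pair to a single node yields an auxiliary tree on $k+1$ nodes, which must contain a leaf; pulling back, $H$ contains a rigid $P_2$ block $\{u,v\}$ attached to the rest of $H$ by a single connecting edge $c=(v,w)$. Deleting $u$ and $v$ then produces a subgraph $H'$ formed from $k$ rigid blocks joined by $k-1$ connectors, hence $H' \in \mathcal{T}_k$; meanwhile $\{u,v,w\}$ together with the edges $(u,v)$ and $c$ spans a copy of $P_3$. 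Since $|E(H)| = 2k+1$, the edges of $H$ partition as $E(H) = E(H') \sqcup E(P_3)$ with $|E(H')| = 2k-1$ and $|E(P_3)| = 2$.

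Given any quantum strategy $S$ for $\mathcal{G}^H$, tracing out the unused parties yields reduced strategies $S_{H'}$ on $H'$ and $S_{P_3}$ on the $P_3$ subgraph. Because the winning probability on any edge depends only on the reduced state and the measurements of its two endpoints, the partition of $E(H)$ gives
\[
\omega(\mathcal{G}^H; S) \;=\; \frac{2k-1}{2k+1}\, \omega(\mathcal{G}^{H'}; S_{H'}) \;+\; \frac{2}{2k+1}\, \omega(\mathcal{G}^{P_3}; S_{P_3}).
\]
By the inductive hypothesis, $\omega^*(\mathcal{G}^{H'}) = \omega(\mathcal{G}^{H'})$ and $\omega^*(\mathcal{G}^{P_3}) = \omega(\mathcal{G}^{P_3})$; and since trees are bipartite, they admit graph homomorphisms into any edge of $\mathcal{S}_\mathcal{G}$, so \cref{lem:classical_graph_strategy} forces both classical values to equal $\omega(\mathcal{G})$. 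Maximising over $S$ then yields $\omega^*(\mathcal{G}^H) \leq \omega(\mathcal{G})$, and since $H$ itself is a bipartite tree a further application of \cref{lem:classical_graph_strategy} gives $\omega(\mathcal{G}^H) = \omega(\mathcal{G})$. Combining these, $\omega^*(\mathcal{G}^H) = \omega(\mathcal{G}^H)$, i.e., there is no quantum advantage on $H$.

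The main obstacle I anticipate is the combinatorial verification that deleting the chosen leaf block always lands inside $\mathcal{T}_k$, i.e.\ that the restricted matching remains perfect and that the remaining non-matching edges still form a spanning tree of the reduced auxiliary graph. This is routine once one adopts the auxiliary-tree viewpoint, but should be handled carefully at the base of the induction, $k=1$, where $\mathcal{T}_1 = \{P_2\}$ and the inductive subgraph collapses to a single edge; here the hypothesis on $\mathcal{T}_1$ degenerates to the absence of quantum advantage in the ordinary two-player game. Beyond this bookkeeping, the proof is simply a linearity calculation on the two pieces of the partition.
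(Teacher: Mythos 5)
Your proof is correct and follows essentially the same route as the paper: split $H \in \mathcal{T}_{k+1}$ into a subgraph $H_1 \in \mathcal{T}_k$ and the two leftover edges, write $\omega(\mathcal{G}^H;S)$ as the corresponding convex combination of winning probabilities of the restricted strategies, and bound each piece by the classical value using the hypothesis together with \cref{lem:classical_graph_strategy} and bipartiteness. The only (harmless) difference is that you identify the leftover pair of edges explicitly as a $P_3$ via the leaf-block argument, whereas the paper handles the remainder $H_2$ by noting it admits a fractional $P_3$-decomposition; your version is, if anything, slightly more explicit about the combinatorics.
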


Noticing that $\mathcal{T}_{2} = \{P_{4}\}$, we obtain the following corollary which characterizes monogamous games, i.e games for which any extra connectivity is sufficient to completely inhibit the expression of nonlocality.

\begin{cor}\label{cor:Main3}
   A game has no advantage on $P_{3}$ and $P_{4}$ if and only if there is no advantage on any graph $H$ such that  $H \to \mathcal{S}_{\mathcal{G}}$.
\end{cor}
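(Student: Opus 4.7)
The plan is to derive the corollary by assembling Theorems \ref{thm:Main1} and \ref{thm:Main2}, using the identification $\mathcal{T}_{2} = \{P_{4}\}$ as the base of an induction, and by verifying a short graph-homomorphism fact for the reverse implication.

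For the forward direction, I would assume no quantum advantage on $P_{3}$ and on $P_{4}$, and split the set of graphs $H$ with $H \to \mathcal{S}_{\mathcal{G}}$ into those lying outside every $\mathcal{T}_{k}$ and those lying inside some $\mathcal{T}_{k}$. The first family is handled directly by Theorem \ref{thm:Main1}, which uses only the $P_{3}$ hypothesis. For the second family I would induct on $k \geq 2$: the base case $k = 2$ is exactly the hypothesis on $P_{4}$, since $\mathcal{T}_{2} = \{P_{4}\}$; the inductive step follows from Theorem \ref{thm:Main2}, which takes no advantage on $P_{3}$ together with no advantage on all of $\mathcal{T}_{k}$ and outputs no advantage on $\mathcal{T}_{k+1}$. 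Combining the two cases covers every $H$ admitting a homomorphism to $\mathcal{S}_{\mathcal{G}}$.

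For the reverse implication, I would first verify that both $P_{3}$ and $P_{4}$ themselves admit a homomorphism to $\mathcal{S}_{\mathcal{G}}$, so that the universal statement on the right-hand side actually applies to these graphs. Since the classical optimum of $\mathcal{G}$ is attained by some pair $(f_{A}, f_{B})$ of deterministic strategies, this pair is by definition an edge of $\mathcal{S}_{\mathcal{G}}$, yielding a homomorphism $P_{2} \to \mathcal{S}_{\mathcal{G}}$. Because $P_{3}$ and $P_{4}$ are bipartite, each admits a homomorphism to $P_{2}$, and composition gives $P_{3}, P_{4} \to \mathcal{S}_{\mathcal{G}}$. Specializing the assumption to $H = P_{3}$ and $H = P_{4}$ then delivers the two nonadvantage statements on the left-hand side.

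There is no serious obstacle, since the statement is essentially a repackaging of the two main theorems together with the base-case identity $\mathcal{T}_{2} = \{P_{4}\}$. The only point that requires any care is ensuring the backward direction truly engages the hypothesis, which is resolved by the bipartiteness argument above.
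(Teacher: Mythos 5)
Your proposal is correct and matches the paper's intended argument: the paper presents \cref{cor:Main3} as an immediate consequence of \cref{thm:Main1} and \cref{thm:Main2} together with the observation $\mathcal{T}_2 = \{P_4\}$, which is precisely the induction you assemble, and your bipartiteness argument for the reverse direction mirrors the remark the paper makes in its proof of \cref{thm:Main2}.
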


Next we will exhibit that numerous well-studied nonlocal games satisfy the above criteria and that the CHSH game does not. 

\section{Illustrative examples}

\paragraph{The CHSH nonlocal game.}

The Bell operator for the CHSH game is given by,
\begin{equation*}
    \mathcal{B} = A_0B_0 + A_0B_1 + A_1B_0 - A_1B_1.
\end{equation*}
This game has classical value of $\nicefrac{3}{4}$ on all graphs, since the strategy graph of this game as a loop (see \Cref{fig:strategy_graph}), and the quantum value is given by the Tsirelson bound of $\tfrac{1}{2} + \tfrac{1}{2 \sqrt{2}}$ \cite{Cirelson1980}. 

Here we provide the sums-of-squares proof that the CHSH game has no quantum advantage when played on the graph $P_3$:
\begin{equation} \label{eq:CHSH_P3_monogamy}
    2 - \frac{1}{2} \big( \mathcal{B}_{AB} + \mathcal{B}_{BC} \big) = Q^2_1 + Q^2_2,
\end{equation}
where $Q_1$ and $Q_2$ are as follows
\begin{align*}
    Q_1 &\coloneqq \frac{1}{2\sqrt{2}} \brac{ A_0 (B_0 - B_1) + C_1 (B_0 + B_1) - 2 A_0 C_1 } \\
    Q_2 &\coloneqq \frac{1}{2\sqrt{2}} \left( A_1 (B_0 + B_1) +  C_0 (B_0 - B_1) - 2A_1 C_0 \right).
\end{align*}

We show below that there exists a quantum strategy which wins with probability $\tfrac{1}{2} +\tfrac{\sqrt{10}}{12} \approx 0.76 > \nicefrac{3}{4}$ on the $P_4$ graph. Furthermore, this strategy is optimal. We, once again, show this upper bound via a sum-of-squares decomposition,
\begin{equation} \label{eq:CHSH_P4_monogamy}
    \frac{2 \sqrt{10}}{3} - \frac{1}{3} \big(\mathcal{B}_{AB} + \mathcal{B}_{BC} + \mathcal{B}_{CD} \big) = R^2_1 + R^2_2 + R^2_3 + R^2_4,
\end{equation}
where $R_{1}, \ldots, R_{4}$ are given in the appendix, and $\tfrac{2 \sqrt{10}}{3}$ is the bias corresponding to the optimal winning probability. 

The quantum state $\rho$ and measurements which saturates this bound are given in the appendix. The positive-partial-transpose (PPT) criterion \cite{Peres1996,Horodecki1996} reveals that the pairs $(\rho_{AB}, \rho_{CD})$ and $(\rho_{AC}, \rho_{BD})$ are entangled while the $(\rho_{AD}, \rho_{BC})$ are not. The local values of the CHSH game on $P_4$ are given in \cref{fig:P3andP4}.

\begin{figure}[t]
    \centering
    \hfill
        \begin{tikzpicture}
                \coordinate (1) at (0,0) {};
                \coordinate (2) at (0,4em) {};
                \coordinate (3) at (4em,0) {};
                \coordinate (4) at (4em,4em) {};
                \coordinate (5) at (8em,0) {};
                \coordinate (6) at (8em,4em) {};

                \node at (4em,-2em) {$P_6$};
        
                \foreach \i/\j in {1/2, 3/4, 5/6} {
                    \draw[whiteEdge] (\i.center) -- (\j.center);
                    \draw[blackEdge] (\i.center) -- (\j.center);    
                }

                \foreach \i/\j in {2/4, 3/5} {
                    \draw[whiteEdge] (\i.center) -- (\j.center);
                    \draw[blackEdge] (\i.center) -- (\j.center);    
                }
        
            \foreach \i in {1,...,6}
                \node[blackVertex] at (\i.center) {};
        \end{tikzpicture}
    \hfill
        \begin{tikzpicture}
                \coordinate (1) at (0,0) {};
                \coordinate (2) at (0,4em) {};
                \coordinate (3) at (4em,0) {};
                \coordinate (4) at (4em,4em) {};
                \coordinate (5) at (8em,0) {};
                \coordinate (6) at (8em,4em) {};

                \node at (4em,-2em) {$\bigstar_{1,2,2}$};
        
                \foreach \i/\j in {1/2, 3/4, 5/6} {
                    \draw[whiteEdge] (\i.center) -- (\j.center);
                    \draw[blackEdge] (\i.center) -- (\j.center);    
                }

                \foreach \i/\j in {2/4, 4/6} {
                    \draw[whiteEdge] (\i.center) -- (\j.center);
                    \draw[blackEdge] (\i.center) -- (\j.center);    
                }
        
            \foreach \i in {1,...,6}
                \node[blackVertex] at (\i.center) {};
        \end{tikzpicture}
    \hfill{}
    
    \bigskip
    
        \begin{tikzpicture}
          \coordinate (1) at (0,0) {};
            \coordinate (2) at (0,4em) {};
            \coordinate (3) at (4em,0) {};
            \coordinate (4) at (4em,4em) {};
            \coordinate (5) at (8em,0) {};
            \coordinate (6) at (8em,4em) {};
            \coordinate (7) at (12em,0) {};
            \coordinate (8) at (12em,4em) {};

            \node at (6em,-2em) {Covering of an element in $\mathcal{T}_4$};

            \foreach \i/\j in {1/2, 1/3} {
                \draw[whiteEdge] (\i.center) -- (\j.center);
                \draw[dash=on 0.3pt off 4.6pt phase 1.1pt, line cap=round, edge, draw=figureColorPalette1] (\i.center) -- (\j.center);

            }

            \foreach \i/\j in {3/4, 4/6, 5/6, 6/8, 7/8} {
                \draw[whiteEdge] (\i.center) -- (\j.center);
                \draw[densely dashed, edge, draw=figureColorPalette2] (\i.center) -- (\j.center);
            }

            \foreach \i in {1,...,8}
                \node[blackVertex] at (\i.center) {};
        \end{tikzpicture}
    \caption{The graph family $\mathcal{T}_3$ contains two graphs: the path graph $P_6$ and the tree $\bigstar_{1,2,2}$. Below, a covering of graph in $\mathcal{T}_4$ using one $P_3$ subgraph, formed using the 3 nodes on the left, and a subgraph from $\mathcal{T}_{3}$ on the right.}
    \label{fig:T3andT4}
\end{figure}

As a consequence of the quantum advantage on $P_4$, \cref{cor:Main3} cannot be applied for the CHSH game. Instead, we will use the \cref{thm:Main2} with $k = 3$ to conclude the behavior on all other graphs, since for all graph $H$, we have $H \to \mathcal{S}_{\mathcal{G}}$. The graph family $\mathcal{T}_3$ contains only two graphs (up to isomorphism) as shown in \cref{fig:T3andT4}. 

Using the semidefinite hierarchy due to Navascués, Pironio, and Acín \cite{navascues2007bounding, navascues2008convergent}, we confirm numerically no quantum advantage for the CHSH game on either graph. Thus, by \cref{thm:Main2}, CHSH exhibits nonlocality only on the graphs $P_2$ and $P_4$.

We would like to note that the monogamy relation provided above of $\mathcal{B}_{AB} + \mathcal{B}_{BC} + \mathcal{B}_{CD} \leq 2\sqrt{10}$ is implied by the previously known quadratic Bell monogamy relation of $\mathcal{B}_{AB}^{2} + \mathcal{B}_{BC}^{2} \leq 8$ given by Toner and Verstraete \cite{0611001}. Indeed, by the Cauchy-Schwarz inequality, 
\begin{align*}
    \mathcal{B}_{AB} + \mathcal{B}_{BC} + \mathcal{B}_{CD} &\leq \sqrt{\mathcal{B}_{AB}^{2} + 2\mathcal{B}_{BC}^{2} + \mathcal{B}_{CD}^{2}} \sqrt{2 + \frac{1}{2}} \\
        &\leq \sqrt{16}\sqrt{2 + \frac{1}{2}} \\
        &= 2\sqrt{10}.
\end{align*}
As such, this monogamy relation does not imply any more information on the feasibility of certain quantum correlations more than what is implied by the quadratic Bell monogamy relations. In contrast to this, our numerical monogamy relations of $\mathcal{B}_{AB} + \mathcal{B}_{BC} + \mathcal{B}_{CD} + \mathcal{B}_{DE} + \mathcal{B}_{EF} \leq 10$ and $\mathcal{B}_{AB} + \mathcal{B}_{BC} + \mathcal{B}_{CD} + \mathcal{B}_{CF} + \mathcal{B}_{EF} \leq 10$ do actually provide \emph{new} Bell monogamy relations which cannot be derived from any previously known monogamy relations. Consider, for example, setting $\mathcal{B}_{AB}=\mathcal{B}_{CE}=\mathcal{B}_{EF}=\frac{61}{26}$, and $\mathcal{B}_{BC}=\mathcal{B}_{DE}=\frac{41}{26}$. This point will be in the feasible region implied by $\mathcal{B}^2_{XY} +\mathcal{B}^2_{YZ} \leq 8$ for any three consecutive $X,Y,Z$. On the other hand, $\mathcal{B}_{AB}+\mathcal{B}_{BC}+\mathcal{B}_{CD}+\mathcal{B}_{DE}+\mathcal{B}_{EF} > 10$ (see~\cref{fig:new_monogamy_relation}).

\begin{figure}[ht]
    \centering
    \begin{tikzpicture}
        \begin{axis}[
            xmin=0, xmax=3,
            ymin=0, ymax=3,
            xlabel={$\mathcal{B}_{AB} \;=\; \mathcal{B}_{CD} \;=\; \mathcal{B}_{EF}$},
            ylabel={$\mathcal{B}_{BC} \;=\; \mathcal{B}_{DE}$},
            xtick={0,1,2,3},
            ytick={0,1,2,3},
        ]
            \addplot [
                ultra thick, 
                draw=figureColorPalette2!70, 
                fill=figureColorPalette2!30,
            ] 
            coordinates {(0,0) (0,5) (10/3,0) (0,0)};
            
            \addplot [
                thick,
                domain=0:2*pi, 
                samples=200, 
                draw=figureColorPalette1!70, 
                fill=figureColorPalette1!30,
            ]
            ({sqrt(8)*cos(deg(x))}, {sqrt(8)*sin(deg(x))});
            
            \addplot [
                thick, 
                draw=figureColorPalette2!70, 
                fill=none,
            ] 
            coordinates {(0,0) (0,5) (10/3,0) (0,0)};
            
            \addplot[
              only marks, 
              mark=*,
              mark size=1.5pt,
              red!90,
            ] 
            coordinates {(61/26,41/26)};
        \end{axis}
    \end{tikzpicture}
    \caption{A slice of the correlation set for the CHSH game with six players on the path graph $P_6$ is shown, corresponding to the subspace defined by $AB = CD = EF$ and $BC = DE$. The disk of radius $2\sqrt{2}$ represents the intersection of the three cylinders imposed by the quadratic monogamy relations, while the linear region is defined by our new monogamy relation. Note that some points, although lying within the intersection of the cylinders, do not satisfy our monogamy relation, for instance the point $(\nicefrac{61}{26},\nicefrac{41}{26})$.}
    \label{fig:new_monogamy_relation}
\end{figure}

\paragraph{The Odd Cycle nonlocal game.}

We also remark that for the well-known $n$-vertex Odd Cycle game \cite{Cleve}, such average case advantage is not possible on any graph other than $P_2$. First note that the optimal classical strategy, achieving $1-\frac{1}{2n}$, is symmetric and thus by \cref{lem:classical_graph_strategy} the classical value is preserved on all graphs. Using the NPA hierarchy one can establish that this value upperbound the quantum value on both $P_3$ and $P_4$ graphs and thus by \cref{cor:Main3}, this bound holds for all graphs.

\subsection{Polygamous Nonlocality}\label{subsec:polygamous}

In this section, we resolve the question about the existence of a nonlocal game (with a quantum advantage) whose quantum winning probability remains equal when extending the standard two-player configuration to three players distributed along a path graph, i.e. $\omega^*(\mathcal{G}) = \omega^*(\mathcal{G}^{P_3})$.

We provide a positive solution to this question by introducing the following construction. Let $\mathcal{G}$ denote any two-player nonlocal game satisfying $\omega^*(\mathcal{G})=1$ and $\omega(\mathcal{G}) \leq \frac{1}{4}$. We then define a new nonlocal game, $\mathcal{G}^{\vee 2}$, as two simultaneous instances of $\mathcal{G}$. The referee's input is a pair of questions for each instance of $\mathcal{G}$, and the players' outputs are a corresponding pair of answers. The winning condition is satisfied if at least one of the two instances is won. Formally, the referee's predicate is  
\begin{align*}
    &v \big( (x_1, x_2), (y_1, y_2), (a_1, a_2), (b_1, b_2) \big) \\ &= v_{\mathcal{G}}(x_1, y_1, a_1, b_1) \vee v_{\mathcal{G}}(x_2, y_2, a_2, b_2), 
\end{align*}
where $v_{\mathcal{G}}(\cdot)$ is the referee predicate for the original $\mathcal{G}$ game. 

The quantum winning probability for $\mathcal{G}$ with two players is $1$, and In the case of three players arranged as $P_3$ the quantum value remains $1$. To see this let $\ket{\mathcal{G}}$ be a bipartite state that wins the game $\mathcal{G}$ with probability $1$. The global state shared by the three players $A,B,C$ is given by
\begin{equation*}
    \underbrace{\big( \ket{\mathcal{G}}_{AB} \otimes \ket{0}_C \big)}_{= \ket{\psi}} \otimes \underbrace{\big( \ket{0}_A \otimes \ket{\mathcal{G}}_{BC} \big)}_{= \ket{\phi}}.
\end{equation*}
The measurement strategy is as follows: for the first instance, the players use the state $\ket{\psi}$, thus $A$ and $B$ use $\ket{\mathcal{G}}_{AB}$ and win with probability 1; for the second instance, the players use the state $\ket{\phi}$, thus $B$ and $C$ use $\ket{\mathcal{G}}_{BC}$, also winning with probability 1. Consequently, on each edge of $P_3$ at least one instance of $\mathcal{G}$ is always won with certainty, giving $\omega^*(\mathcal{G}^{P_3}) = 1$.

To see that this is not possible for any classical strategy note that for two players
\begin{align*}
   \omega \big( \mathcal{G}^{\vee 2} \big) &= \Pr( \text{W}_1 \wedge  \text{W}_2) + \Pr(\text{W}_1 \wedge  \text{L}_2) + \Pr(\text{W}_2 \wedge  \text{L}_1) \\ &\leq 2\Pr(\text{W}_1) + \Pr(\text{W}_2) \leq \frac{3}{4},
\end{align*}
where $\text{W}_i$ and $\text{L}_i$ are the events that the $i$-th instance of the game $\mathcal{G}$ is won or lost, respectively. 
Since $P_3$ is a bipartite graph, there is a graph homomorphism from $P_3$ to the strategy graph for $\mathcal{G}$. Using \cref{lem:classical_graph_strategy} it follows that $\omega(\mathcal{G}^{P_3}) = \omega(\mathcal{G}) \leq \frac{3}{4}$. 

To show that such a game $\mathcal{G}$ exists, let MS denote the symmetric magic square game \cite{Mermin, Peres, SynBCS}, which has a classical winning probability $\omega(\text{MS}) = \tfrac{35}{36}$ and an quantum winning probability $\omega^*(\text{MS}) = 1$. Then take $\mathcal{G} = \text{MS}^{n}$, to denote the $n$-th fold parallel repetition of MS. By Raz's parallel repetition theorem \cite{RazParallel} we have that $\omega(\mathcal{G}) \leq \frac{1}{4}$ for large enough $n$.

\section{Discussion}

In conclusion, in this letter, we provide a general graph-theoretic method which allows us to extend Bell monogamy relations on certain smaller configurations to an infinite family of configurations. We apply this result to the well-known games, CHSH and the Odd Cycle games. For CHSH, we find new Bell monogamy relations between 6 parties which cannot directly be inferred from known monogamy relations on $P_3$, and demonstrate that this relation is sometimes more informative. Finally we give the first example a game where the quantum value on $P_3$ is $1$ for both pairs of players simultaneously. 

We conclude this letter with future directions:
\begin{itemize}
    \item The quadratic Bell monogamy relation of Toner and Verstraete has the property that any numerical values that satisfy the inequality are physically realizable \cite{0611001}. Is there an analogous Bell monogamy relation for 6 parties?
    \item We can achieve quantum advantage on any degree-bounded graph. Is it possible to have a game which has advantage on \emph{all} graphs?
    \item One can consider extending instead $k$-player nonlocal games to $k$-uniform hypergraphs as considered in Tran et al. \cite{ramanathan2018bell}. Can we extend our graph decomposition techniques to this more general setting?
\end{itemize}

\section{Acknowledgements}

\begin{acknowledgments}
    This research was initiated under the wonderful supervision of David Gosset and William Slofstra at the Institute for Quantum Computing during the summer URA program. D.C. thanks them for the many fruitful discussions and their mentorship. The authors thank Elie Wolfe, Daniel Díaz, and Maria Alanon for many helpful conversations and for their remarks on non-signaling correlations. We thank and Lewis Wooltorton for their comments on an earlier version of the paper. The authors also thank Aaron Tiskuisis for suggesting the use of the strategy graph when analysis the classical value of a game, which was included in the updated version of this work. Lastly, the authors thank Ravishankar Ramanathan for feedback on the construction of a polygamous game.  D.C. acknowledges the support of the Natural Sciences and Engineering Research Council of Canada through grant number RGPIN-2019-04198. A.M. is supported by NSERC Alliance Consortia Quantum grants, reference number: ALLRP 578455 - 22. D.R. acknowledges the support of the Air Force Office of Scientific Research under award number FA9550-20-1-0375.
\end{acknowledgments}

\bibliographystyle{apsrev4-2}
\bibliography{biblio}

\section{Supplementary Material}  

\subsection{CHSH on \texorpdfstring{$P_4$}{P4}}

First, we give the sum-of-squares decomposition for $\mathrm{CHSH}^{P_{4}}$:
\begin{equation*}
    2 \sqrt{10} - \big(\mathcal{B}_{AB} + \mathcal{B}_{BC} + \mathcal{B}_{CD} \big) = R^{2}_{1} + R^2_2  + R^2_3  + R^2_4,
\end{equation*}
with $R_1, \ldots, R_4$ defined by
\begin{widetext}
    \begin{align*}
        R_1 &\coloneqq C_0 (B_0 - B_1) + C_1 (B_0 + B_1) - 2 D_0 (C_0 + C_1) + 2 D_1 (C_0 - C_1) \\
        R_2 &\coloneqq 1 - \frac{1}{2 \sqrt{10}} \big( (B_0 (A_0 + A_1) + B_1 (A_0 - A_1) + C_0 (B_0 +  B_1) + C_1 (B_0 - B_1) + D_0(C_0 + C_1) + D_1 (C_0 - C_1) \big) \\
        R_3 &\coloneqq \frac{1}{3} \big( B_0 (3 A_0 -  A_1) + B_1 (3 A_0 + A_1) - C_0 (B_0 + B_1) + C_1 (B_0 - B_1) - D_0 (C_0 + C_1) - D_1 (C_0 - C_1) \big) \\
        R_4 &\coloneqq \frac{1}{4} \big( 4 A_1 (B_0 - B_1) + C_0 (B_0 + B_1) - C_1 (B_0 - B_1) - 2 D_0 (C_0 + C_1) - 2 D_1 (C_0 - C_1) \big).
    \end{align*}
\end{widetext}

The optimal value $2 \sqrt{10}$ for CHSH over $P_4$ can be obtained for $\mathcal{B}_{AB}$ and $\mathcal{B}_{CD}$ both equal to the value $\frac{4\sqrt{2}}{\sqrt{5}}$ and $\mathcal{B}_{BC}$ equals to the value $\frac{2\sqrt{2}}{\sqrt{5}}$, using the following observables
\begin{align*}
    A_0 = C_0 &= \sigma_z, & A_1 = C_1 &= \sigma_z \\
    B_0 = D_0 &= \frac{\sigma_x + \sigma_z}{\sqrt{2}}, & B_1 = D_1 &= \frac{\sigma_x - \sigma_z}{\sqrt{2}}.
\end{align*}
where $\sigma_i$ are the Pauli matrices.
We denotes by $(\textsc{ab})$ the swap operator which interchanges the system between $A$ and $B$. Then the quantum state $\rho$ which saturates this bound is
\begin{align*}
    \rho \coloneqq& S \big( \ket{\Omega}\!\bra{\Omega} \otimes \ket{\Omega}\!\bra{\Omega} \big) S^* \\
    =& \frac{1}{20} \sum_{i=0}^{1} (3\sqrt{5} + 5) \ket{iiii} + ( 5 + \sqrt{5} ) \ket{ii\overline{ii}} \\ &\quad\qquad + (5 - \sqrt{5}) \ket{i\overline{ii}i} + (3\sqrt{5}-5)\ket{i\overline{i}i\overline{i}} 
\end{align*}
where $S$ is the following Hermitian (non-unitary) sum of swap operators:
\begin{align*}
    S \coloneqq \frac{1}{20} \Big( &- (5 + \sqrt{5}) (\textsc{ad})(\textsc{bc}) \\
    &+ (5 - 3 \sqrt{5}) (\textsc{ad})  + (-5 + \sqrt{5}) (\textsc{bc}) \Big).
\end{align*}

\subsection{Proofs of Lemmas and Theorems}

Before proceeding with the proofs, let us first give the definitions of fractional $P_3$-decomposition, fractional perfect matching, and line graph.

\begin{defn}
    Let $P$ be the collection of $P_{3}$ subgraphs in $H$. A graph $H$ has a \emph{fractional $P_{3}$-decomposition} if there exists a function $f : P \to [0,1]$ such that for each edge $e \in H$, $\sum_{\substack{K \in P \\ e \in K}} f(K) = 1$, where the sum is over all $P_{3}$ subgraphs $K$ in $P$ containing the edge $e$.
\end{defn}

\begin{defn}
    A graph $H=(V,E)$ has a \emph{fractional perfect matching} if there exists a function $f : E \to [0,1]$ such that for each vertex $v \in V$, $\sum_{(u,v) \in E} f \big( (u,v) \big) = 1$.
\end{defn}

\begin{defn}
    Given a graph $H$, the \emph{line graph of $H$} is the graph $L(H)$ whose vertices are edges of $H$ and two vertices of $L(H)$ are incident if the corresponding edges are incident in $H$.
\end{defn}

The following lemma characterizes exactly which graph $H$ has a fractional $P_{3}$-decomposition.

\begin{lem}\label{lem:graph_decomp}
     A connected graph $H$ has a fractional $P_{3}$-decomposition if and only if $H$ does not belong to $\mathcal{T}_{k}$ for any $k$.
\end{lem}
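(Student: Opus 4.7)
The plan is to translate the question into a fractional perfect matching problem on the line graph $L(H)$, and then apply a Hall-type characterization. I first observe that a $P_3$ subgraph of $H$ is, by definition, a pair of edges sharing a common vertex, which is exactly an edge of $L(H)$. Under this identification, the $P_3$'s of $H$ containing a fixed edge $e$ correspond to the edges of $L(H)$ incident to the vertex $e$, so a function $f : P \to [0,1]$ is a fractional $P_3$-decomposition of $H$ if and only if, viewed as a weighting of $E(L(H))$, it is a fractional perfect matching of $L(H)$.

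I would next invoke the following consequence of Balinski's half-integrality theorem for the matching polytope (equivalently, LP duality applied to the fractional vertex-cover LP): a graph $G$ admits a fractional perfect matching if and only if $|N_G(I)| \geq |I|$ for every independent set $I$ of $G$. Applied to $G = L(H)$, the independent sets of $L(H)$ are exactly the matchings $M$ of $H$, and $N_{L(H)}(M)$ is the set $\partial M$ of edges of $H$ that are not in $M$ but share an endpoint with some edge of $M$. The problem therefore reduces to showing that $H$ lies in some $\mathcal{T}_k$ if and only if some matching $M$ of $H$ satisfies $|\partial M| < |M|$.

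The forward direction is immediate: when $H \in \mathcal{T}_k$, take $M$ to be the unique perfect matching of the tree $H$; the remaining $k-1$ connector edges are precisely $\partial M$, giving $|\partial M| = k - 1 < k = |M|$. For the reverse direction, I would pick $M$ minimizing $|M|$ among matchings violating the Hall condition, set $W = V(M)$ and $U = V(H) \setminus W$, and argue $U = \emptyset$ by contradiction. If $U \neq \emptyset$, connectivity of $H$ yields an edge $uw$ with $u \in U$ and $w \in W$; letting $m_w = ww'$ be the matching edge at $w$ and $M' = M \setminus \{m_w\}$, a direct check shows $\partial M' = \partial M \setminus (A_w \cup A_{w'})$, where $A_v$ denotes the set of edges from $v$ to $U$. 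Since $uw \in A_w$, this gives $|\partial M'| \leq |\partial M| - 1 < |M| - 1 = |M'|$, contradicting the minimality of $|M|$. Hence $W = V(H)$, whereupon $|E(H)| = |M| + |\partial M| < 2|M| = |V(H)|$ combined with the connectivity lower bound $|E(H)| \geq |V(H)| - 1$ forces $H$ to be a tree on $2|M|$ vertices with perfect matching $M$, that is, $H \in \mathcal{T}_{|M|}$.

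The main obstacle is the bookkeeping in the reverse direction: the edges lost from $\partial M$ in passing to $\partial M'$ are precisely those whose only endpoint in $W$ lies in $\{w,w'\}$, and one must verify carefully by connectivity that the witness edge $uw$ is indeed among them, so that the strict inequality $|\partial M'| < |M'|$ holds and minimality is contradicted.
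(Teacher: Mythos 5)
Your proof is correct, and it follows the same overall route as the paper's: both arguments identify $P_3$-subgraphs of $H$ with edges of the line graph $L(H)$, reduce the existence of a fractional $P_3$-decomposition to the existence of a fractional perfect matching of $L(H)$, and then invoke a known characterization of fractional perfect matchings. Where you diverge is in which form of that characterization you use and how you finish. The paper applies the deficiency version, $i(G\setminus S)\le\abs{S}$ for all vertex subsets $S$ (citing Scheinerman--Ullman), which pulled back to $H$ becomes a condition on isolated $P_2$-components after deleting an edge set $F$; the endgame is then a short component count using connectivity ($\abs{F}\ge m+\tilde{i}(H\setminus F)-1$ forces $m=0$). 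You instead use the equivalent independent-set form $\abs{N_G(I)}\ge\abs{I}$, which on $L(H)$ becomes the clean condition $\abs{\partial M}\ge\abs{M}$ for every matching $M$ of $H$, and you close with a minimal-counterexample argument plus the edge count $\abs{E}=\abs{M}+\abs{\partial M}<2\abs{M}=\abs{V}$ against the tree bound. The two criteria are equivalent (take $I$ to be the isolated vertices of $G\setminus S$ in one direction and $S=N_G(I)$ in the other), so nothing is lost; your version has the advantage that the violating object is directly a matching of $H$, which makes the identification with $\mathcal{T}_{\abs{M}}$ immediate, at the cost of the more delicate bookkeeping in the exchange step $\partial M'=\partial M\setminus(A_w\cup A_{w'})$ --- which you do handle correctly, including the degenerate case where $M'$ is empty (there the inequality $\abs{\partial M'}<\abs{M'}=0$ is itself absurd, so the contradiction still goes through). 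If you keep your version, it would be worth a sentence noting explicitly that the independent-set criterion is equivalent to the standard deficiency criterion, since the latter is what the cited reference states.
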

\begin{proof}
    By definition of the line graph, there is a objective correspondence between the edges of $H$ and the vertices of $L(H)$. Hence, we have a objective correspondence between $P_{3}$ subgraphs of $H$ and edges of $L(H)$. This allows us to map between fractional $P_{3}$-decompositions on $H$ and fractional perfect matchings of $L(H)$, thereby concluding that $H$ has a fractional $P_{3}$-decomposition if and only if $L(H)$ has a fractional perfect matching.

    Now, given a graph $H$, denote $i(H)$ the number of isolated vertices (i.e. degree zero vertices) of $H$ and $\tilde{i}(H)$ the number of isolated $P_{2}$ subgraphs of $H$. It is a standard result \cite{Scheinerman1997FractionalGT} from graph theory that a graph $H$ has a fractional perfect matching if and only if $i(H \setminus S) \leq \abs{S}$ for all $S \subset V(H)$. Using this and the bijective correspondence between edges of $H$ and vertices of $L(H)$, we have that $H$ has a fractional $P_{3}$-decomposition if and only if $\tilde{i}(H \setminus F) \leq \abs{F}$ for all $ F \subset E(H)$.
    
    Finally, to prove the lemma, assume that the graph $H$ does not have a fractional $P_{3}$ decomposition. Then there exists an edge set $F$ such that $\abs{F} < \tilde{i}(H \setminus F)$. So $H \setminus F$ has $\tilde{i}(H \setminus F)$ disjoint copies of $P_{2}$ and some other connected components $H_{1}, \dots, H_{m}$ which are not $P_{2}$. Since we began with a connected graph $H$, we have $\abs{F} \geq m + \tilde{i}(H \setminus F) - 1$. Therefore, $m = 0$ and so $H \in \mathcal{T}_{\tilde{i}(H\setminus F)}$. Conversely, if $H \in \mathcal{T}_{k}$ for some $k$, then let $F$ be the $k-1$ edges joining the $k$ copies of $P_{2}$ in the construction of $H$. Thus, clearly $\abs{F} < \tilde{i}(H \setminus F)$, and so $H$ does not have a fractional $P_{3}$-decomposition.
\end{proof}

{
\color{black}
\begin{proof}[Proof of \Cref{lem:quantum_upperbound}]
     Let $\mathcal{G}$ be a game, $H=(V,E)$ a graph not belonging to any $\mathcal{T}_{k}$, and $\nu$ such that $\omega^* \big( \mathcal{G}^{P_{3}} \big) \leq \nu$. Therefore, by \Cref{lem:graph_decomp} there exists a fractional $P_{3}$-decomposition $f$ of $H$. Then for any quantum strategy $S$,
    \begin{align*}
        \omega \big( \mathcal{G}^{H}; S \big) &= \frac{1}{\abs{E}} \sum_{ e \in E } \omega \big( \mathcal{G}; \restr{S}{e} \big) \\
                  &= \frac{1}{\abs{E}} \sum_{ e \in E } \sum_{\substack{K \in P \\ e \in K}} f(K) \; \omega \big( \mathcal{G}; \restr{S}{e} \big) \\
                  &= \frac{1}{\abs{E}} \sum_{K \in P} f(K) \sum_{ e \in K } \omega \big( \mathcal{G}; \restr{S}{e} \big) \\
                  &\leq \frac{1}{\abs{E}} \sum_{K \in P} f(K) \; \nu \\
                  &= \nu,
    \end{align*}
    where $\restr{S}{e}$ denotes the restriction of the quantum strategy $S$ to players on the edge $e$.
\end{proof}

\begin{proof}[Proof of \Cref{thm:Main1}]
    Let $\mathcal{G}$ be a game such that $\omega^* \big( \mathcal{G}^{P_{3}} \big) = \omega \big( \mathcal{G}^{P_{3}} \big)$, and let $H=(V,E)$ a graph such that $H \to \mathcal{S}_{\mathcal{G}}$ and $H$ does not belong to any $\mathcal{T}_{k}$. Then by \Cref{lem:quantum_upperbound}, for any quantum strategy $S$, we have $\omega \big( \mathcal{G}^{H}; S \big) \leq \omega \big( \mathcal{G}^{P_{3}} \big)$. But by \Cref{lem:classical_graph_strategy}, $\omega \big( \mathcal{G}^{P_{3}} \big) = \omega \big( \mathcal{G}^{H} \big)$.
\end{proof}
}

\Cref{lem:graph_decomp} tells us that for any graph outside of $\bigcup_{k} \mathcal{T}_{k}$, the game $\mathcal{G}$ has no quantum advantage. Furthermore, if we are able to show that, for a fixed $k$, and for all graphs in $\mathcal{T}_{k}$, the game $\mathcal{G}$ has no quantum advantage, then $\mathcal{G}$ has no quantum advantage on larger graphs.

\begin{proof}[Proof of \Cref{thm:Main2}]
    \textcolor{black}{All the graphs in $\mathcal{T}_{k}$ are bipartite graphs, thus for all graph $H \in \mathcal{T}_{k}$ and all games $\mathcal{G}$, we have $H \to \mathcal{S}_{\mathcal{G}}$.}
    If the game $\mathcal{G}$ has no quantum advantage on $P_{3}$, then by \Cref{thm:Main1}, it suffices to consider the graphs in $\bigcup_{k} \mathcal{T}_{k}$. Suppose for some fixed $k$, the game $\mathcal{G}$ has no advantage on all graphs in $\mathcal{T}_{k}$. First notice that for all $j > k$, and any graph $H = (V, E) \in \mathcal{T}_{j}$, there is a graph $H_{1} = (V_{1}, E_{1}) \in \mathcal{T}_{k}$ such that $H_{1}$ is a subgraph of $H$. Let $H_{2} = (V_{2}, E \setminus E_{1})$, where $V_{2}$ is the set of vertices in $E \setminus E_{1}$. The graph $H_{2}$ is clearly also a subgraph of $H$. Furthermore, since $\abs{E}$ and $\abs{E_{1}}$ are odd (as they belong to $\bigcup_{k} \mathcal{T}_{k}$), then $\abs{E} - \abs{E_{1}}$ is even and hence $H_{2}$ has a fractional $P_{3}$ decomposition. Furthermore,
    \begin{align*} 
        \omega \big( \mathcal{G}^{H}; S \big) &= \frac{1}{\abs{E}} \sum_{e \in E} \omega \big( \mathcal{G}; \restr{S}{e} \big) \\
                                   &= \frac{\abs{E_{1}}}{\abs{E}} \omega \big( \mathcal{G}^{H_{1}}; \restr{S}{H_{1}} \big) + \frac{\abs{E}-\abs{E_{1}}}{\abs{E}} \omega \big( \mathcal{G}^{H_{2}};\restr{S}{H_{2}} \big),
    \end{align*}
    where $\restr{S}{H_1}$ and $\restr{S}{H_2}$ denote the restriction of the quantum strategy $S$ to players on the subgraphs $H_1$ and $H_2$ respectively. Hence there is no quantum advantage on $H$.
\end{proof}

\end{document}